\title{Blind Quantum Computation Using a Circuit-Based Quantum Computer}
\author[1]{Yuichi Sano\thanks{\texttt{sano.yuichi.77v@st.kyoto-u.ac.jp}}} 
\affil[1]{Department of Nuclear Engineering, Kyoto University, Nishikyo-ku, Kyoto 615-8540, Japan} 
\date{\today}
\theoremstyle{definition}
\newtheorem{thm}{Theorem}
\newtheorem{defn}[thm]{Definition}
\begin{document}

\maketitle

\begin{abstract}
When a universal quantum computer is used by the public, it is assumed that it will be in the form of a  quantum cloud server that exists in a few bases due to its cost.
In this cloud server, privacy will be a crucial issue, and a blind quantum computation protocol will be necessary so that each user can use the server without the details of the calculations being revealed.
It is also important to be able to verify that the server is performing calculations as instructed by the user, since quantum calculations cannot be verified by classical computation.
In  this paper, we put forward a protocol that achieves blindness using the quantum one-time pad for  encryption and a T-like gate, and while verifying computation using trap qubits.
\end{abstract}

\section{Introduction}
When universal quantum computer is used by public, it is assumed that it will be used as a quantum cloud server that exists in a few bases because the quantum computer is expensive.
In this cloud server, privacy will be a crucial issue.
Thus, blind quantum computation  (BQC) protocol is needed so that each user can use the server without revealing the details of his or her calculations.
\cite{Previous research1,Childs,Previous research2,BFK,MF protocol,Broadbent,Morimae hayashi,half,Full}.

In \cite{Childs}, such a method was proposed based on the quantum onetime pad.
Similar to classical one-time pads [10], the quantum one-time pad uses the encryption  key only once, and the server cannot learn anything about the  user's quantum state.
However, this protocol needs multiple two-way quantum communications. 
In addition, the user is required to have a quantum memory on which a SWAP gate is executed.
In \cite{Broadbent}, another protocol was proposed that  requires neither quantum gates and two-way quantum communication nor  quantum memory and SWAP gates during its computation.
In this  protocol, however, while the input and output are encrypted, the calculation process is revealed to the server. 
This is a crucial drawback because an algorithm itself can constitute  important information that should be kept secret. 
In addition, as a malicious server might have performed a calculation different from the user’'s instruction, a user must have anthe ability to verify the calculation\cite{BFK, MF protocol, Morimae hayashi, FK}.
However, quantum computers generally cannot be simulated in polynomial time by classical  computers, and a user with the limited ability assumed in previous research cannot calculate  whether the results obtained from the server are correct.
Furthermore, such a user cannot verify calculations using trap qubits, which are closely related to  the secrecy of the calculation process.

In this paper, we propose a novel BQC protocol using rotation gates in addition  to the quantum one-time pad.
Our protocol enables verification by trap qubits and can be extended to fault-tolerant computation.
In this protocol, the abilities  required for the user are equivalent to those required in the previous BQC protocol\cite{BFK}.

\section{Preliminaries}
In this section, we describe gate teleportation and the encryption method as known the quantum one-time pad, which are both used in the proposed protocol.
See \cite{Nielsen-Chuang} for the general quantum computation notation.

\subsection{Gate teleportation}
We explain gate teleportation for a $T$ gate that is used for universal gate sets and an $A_\theta$ gate that is used for blindness in the protocol. The $T$ gate is
\begin{equation*}
T=\begin{pmatrix}1 & 0\\ 0 & e^{\frac{i\pi}{4}}\end{pmatrix},
\end{equation*}
and the $A_\theta$ gate is
\begin{equation*}
A_\theta=\begin{pmatrix}1 & 0\\ 0 & e^{i\theta}\end{pmatrix}.
\end{equation*}
When $\theta=\frac{\pi}{4}$, the $A_\theta$ gate is equivalent to the $T$ gate.

For a given state $\ket{\psi}$, $A_\theta \ket{\psi}$ is obtained by using gate teleportation (Figure \ref{fig:Gate teleportation}), without directly executing the $A_\theta$ gate, where $a$ is the measurement result, and $\ket{A_\theta}$ is 
\begin{equation*}
    \ket{A_\theta}=\frac{1}{\sqrt{2}}(\ket{0}+e^{i\theta}\ket{1}).
\end{equation*}

\begin{figure}
\begin{equation*}
    \Qcircuit @C=1em @R=.7em {
    \ket{A_\theta} & & & \qw & \ctrl{1} & \qw & \qw & & & A_{(-1)^a\theta}\ket{\psi}\\
    \ket{\psi}&&& \qw & \targ  & \qw & \meter & \cw & & a \in {0,1}
    }
\end{equation*}
\caption{Executing $A_\theta$ gate by gate teleportation.}
\label{fig:Gate teleportation}
\end{figure}

\subsection{Quantum one-time pad}
In the quantum one-time pad, a user(Alice) generates 2-bit encryption key $a$,$b$ $\in$ \{0,1\} using coin flips, and executes an $X^a$ gate and a $Z^b$ gate to encrypt her input.
The state after encryption $\ket{\psi}_{\text{enc}}$ is 
\begin{equation*}
    \ket{\psi}_{\text{enc}}=X^aZ^b\ket{\psi},
\end{equation*}
where $\ket{\psi}$ is input, and the $X$ and $Z$ gate gates are Pauli matrices:
\begin{equation*}
X=\begin{pmatrix}0 & 1\\ 1 & 0\end{pmatrix} \qquad  Z=\begin{pmatrix}1 & 0\\ 0 & -1\end{pmatrix}.
\end{equation*}

The density matrix obtained by the server(Bob) receiving this quantum state is maximally mixed state, as described below:
\begin{equation*}
    \frac{1}{4}\sum^{1}_{a,b=0}X^aZ^b\ket{\psi}\bra{\psi}Z^bX^a=\frac{I}{2}.
\end{equation*}
Therefore, Bob, who does not know the randomly generated encryption key, cannot learn the input from the received qubits.

Alice, who has received the quantum state after performing the calculation $U$, can decrypt this state  using the encryption keys $a'$ and $b'$, which were altered by the calculation to:
\begin{equation*}
    Z^{b'}X^{a'}U\ket{\psi}_{\text{enc}}=Z^{b'}X^{a'}(X^{a'}Z^{b'}U\ket{\psi})=U\ket{\psi}.
\end{equation*}
For U it suffices to consider only H, T, and CNOT, which form a universal gate set. 
The correspondence between pairs ($a$, $b$) and ($a'$, $b'$) for each gate is described in Figures \ref{fig:X gate}-\ref{fig:T gate}.  

Note that when the T gate is executed as shown in Figure \ref{fig:T gate}, we obtain the following $S$ gate to be modified in addition to the $X$ and $Z$ gates.  
Here the $S$ gate is
\begin{equation*}
S=T^2=\begin{pmatrix}1 & 0\\ 0 & i\end{pmatrix}.
\end{equation*}
The universal gate set requires a  non-Clifford gate, such as a T or Toffoli gate\cite{Gottesma-Knill}.
When executing these non-Clifford gates, Alice requires modifications other than the $X$ gate and the $Z$ gate\cite{Childs, half}.

\begin{figure}[H]
\begin{equation*}
     \Qcircuit @C=1em @R=.7em {
    X^aZ^b\ket{\psi} & & & &\qw& \qw & \gate{X} & \qw & \qw & & & X^aZ^bX\ket{\psi}
    }
\end{equation*}
\caption{Key change at the $X$ gate.}
\label{fig:X gate}
\end{figure}
\begin{figure}[H]
\begin{equation*}
     \Qcircuit @C=1em @R=.7em {
    X^aZ^b\ket{\psi} & & & &\qw& \qw & \gate{Z} & \qw & \qw & & & X^aZ^bZ\ket{\psi}
    }
\end{equation*}
\caption{Key change at the $Z$ gate.}
\label{fig:Z gate}
\end{figure}
\begin{figure}[H]
\begin{equation*}
     \Qcircuit @C=1em @R=.7em {
    X^aZ^b\ket{\psi} & & & &\qw& \qw & \gate{H} & \qw & \qw & & & X^bZ^aH\ket{\psi}
    }
\end{equation*}
\caption{Key change at the $H$ gate.}
\label{fig:H gate}
\end{figure}
\begin{figure}[H]
\begin{equation*}
    \Qcircuit @C=1em @R=.7em {
    X^aZ^b\ket{\psi}_1 & & & & \qw & \ctrl{1} & \qw & \qw & & & & & & X^aZ^{b\oplus d}CNOT_{\text{control}}\ket{\psi}_1\\
    X^cZ^d\ket{\psi}_2 & & & &  \qw & \targ  & \qw & \qw & & & & & & X^{a\oplus c}Z^dCNOT_{\text{target}}\ket{\psi}_2
    }
\end{equation*}
\caption{Key change at the $CNOT$ gate.}
\label{fig:CNOT gate}
\end{figure}

\begin{figure}[H]
\begin{equation*}
     \Qcircuit @C=1em @R=.7em {
    X^aZ^b\ket{\psi} & & & &\qw& \qw & \gate{T} & \qw & \qw & & & & X^aZ^{a \oplus b}P^aT\ket{\psi}
    }
\end{equation*}
\caption{Key change at the $T$ gate.}
\label{fig:T gate}
\end{figure}

\section{$A_\theta$ gate and universal quantum computation}
In this section, we explain the $A_\theta$ gate that is important for our BQC protocol.
First, we extend the quantum one-time pad by adding the $A_\theta$ gate.
Next, we show how to modify the $T$ gate and the $A_\theta$ gate in the quantum one-time pad.
Finally, we explain universal quantum computation using the $A_\theta$ gate.

\subsection{Quantum one-time pad for the $A_\theta$ gate}
In this subsection we show that the $A_\theta$ gate can be hidden using the quantum one-time pad.
 When $\ket{A_\theta}$ is encrypted using the quantum one-time pad, it is given by
\begin{equation*}
    \ket{A_\theta}_{\text{enc}}=X^aZ^b\ket{A_\theta}.
\end{equation*}
When executing gate teleportation by $\ket{A_\theta}_{\text{enc}}$, the $A_\theta$ gate works as shown in Figure \ref{fig:Atheta}, since the $Z$ gate commutes with the $A_\theta$ gate.
Thus, it is possible to encrypt the $A_\theta$ gate using the quantum one-time pad.
Note that the $A_\theta$ state is in the maximum mixed state without encryption using the $X$ gate, meaning such encryption using the $X$ gate is not required. 
But the $X$ gate is used to hide the measurement result.
It is using for the modification described in the following subsection.

\subsection{Modifying the $A_\theta$ gate}
To apply the $T$ gate and the $A_\theta$ gate to a quantum state encrypted by the $X$ gate, the quantum state requires modification.
When the $A_\theta$ gate is applied to the quantum state encrypted by the $X$ gate, an $A_{-\theta}$ gate is applied to the quantum state instead of the $A_\theta$ gate as shown in Figure \ref{fig:enc Atheta}.
When Alice obtains the undesired measurement results, the $A_{\theta}$ gate is executed, since the 
angle is flipped.
Thus, even if it is encrypted with the $X$ gate, flipping the desired measurement result Alice can execute the $A_\theta$ gate without changing $A_\theta$ state.
The $T$ gate is a special case of the $A_\theta$ gate and can be modified in a similar way.

\begin{figure}
\begin{equation*}
     \Qcircuit @C=1em @R=.7em {
    X^aZ^b\ket{A_\theta} & & & & \qw & \ctrl{1} & \qw & \qw & & & & Z^bA_{(-1)^{a\oplus c}\theta}\ket{\psi}_{\text{enc}}\\
    \ket{\psi}_{\text{enc}}  & & & &  \qw & \targ  & \qw & \meter & \cw & &  c
    }
\end{equation*}
\caption{Key change at the $A_\theta$ gate using gate teleportation}
\label{fig:Atheta}
\end{figure}

\begin{figure}
\begin{equation*}
     \Qcircuit @C=1em @R=.7em {
    XZ^b\ket{\psi} & & & &\qw& \qw & \gate{A_\theta} & \qw & \qw & & & & XZ^{b}A_{-\theta}\ket{\psi}
    }
\end{equation*}
\caption{Applying the $A_\theta$ gate to a quantum state encrypted by the quantum one-time pad.}
\label{fig:enc Atheta}
\end{figure}

As mentioned above, the $A_\theta$ gate requires additional correction because the angle $\theta$ executed varies depending on the measurement result.
Here, the angle is limited to $\theta = \frac{n \pi}{4}(n = \{0,1, \ldots, 7\})$.
If the measurement gives an undesired result, an $A_{-\theta}$ gate is executed, meaning Alice needs to execute an $A_{2\theta}$ gate to correct it.
The gate used for this correction may also bring  about undesired measurement results.
The next gate for the second modification is the $A_{4\theta}$ gate, for which the angle $\theta$ is limited to $\theta = \frac{n \pi}{4}(n = \{0,1, \ldots, 7 \})$. Thus $A_{4\theta} = Z$ or $I$, meaning the correction can be completed by executing the $Z$ gate or the $I$ gate.
If $\theta$ is limited to $\theta = \frac{n\pi}{4}(n = \{0, 1, \ldots, 7 \})$, Alice can, therefore, be sure to execute the $A_\theta$ gate by preparing two additional qubits and one additional gate.

\subsection{T-like gate group and one-qubit universal gate}
In\cite{Nielsen-Chuang}, an approximation of any one-qubit gate using the $T$ gate and the $H$ gate is achieved, as these two gates can achieve non-parallel two-axis rotation on  the Bloch sphere.
We represent gate blindness using non-parallel eight-axis rotation  and a T-like gate. The T-like gate is defined as follows:

\begin{eqnarray*}
T=A_{\frac{i\pi}{4}}=\begin{pmatrix}1 & 0\\ 0 & e^{\frac{i\pi}{4}}\end{pmatrix},
\qquad T^3=A_{\frac{i3\pi}{4}}=\begin{pmatrix}1 & 0\\ 0 & e^{\frac{i3\pi}{4}}\end{pmatrix},\\
T^\dag=A_{\frac{-i\pi}{4}}=\begin{pmatrix}1 & 0\\ 0 & e^{\frac{-i\pi}{4}}\end{pmatrix},
\qquad (T^3)^\dag=A_{\frac{-i3\pi}{4}}=\begin{pmatrix}1 & 0\\ 0 & e^{\frac{-i3\pi}{4}}\end{pmatrix}.
\end{eqnarray*}

By combining the T-like gate with the $H$ gate, it is possible to make rotations at eight axes that are  not parallel.
Table \ref{tab:Axis of rotation} shows these eight axes along with the gate  combinations.
In particular, note that the rotation axis of $T^\dag HT^\dag H$ is parallel to the rotation axis of $HTHT$.
It is known that an arbitrary one-qubit gate can be approximated by the combination of $THTH$ and $HTHT$\cite{Nielsen-Chuang}.
The rotations of these two axes are not parallel, meaning that they  can achieve any rotation for the quantum state that Alice desires.
This is known thus a universal gate set for a one-qubit gate.
In the same way, any one-qubit gate can be approximated  by the gate group shown in Table \ref{tab:Axis of rotation}.

\begin{table}[h]
  \begin{center}
   \caption{Axes of rotation and the corresponding gate combinations}
   \begin{tabular}{|c|c|} \hline
    Gate & Axis of rotation \\ \hline \hline
    $THTH$ & ($\cos{\frac{\pi}{8}}$,$\sin{\frac{\pi}{8}}$,$\cos{\frac{\pi}{8}}$) \\ \hline
    $THT^\dag H$ & ($-\cos{\frac{\pi}{8}}$,$-\sin{\frac{\pi}{8}}$,$\cos{\frac{\pi}{8}}$) \\ \hline
    $T^\dag HTH$ & ($\cos{\frac{\pi}{8}}$,$-\sin{\frac{\pi}{8}}$,$-\cos{\frac{\pi}{8}}$) \\ \hline
    $T^\dag HT^\dag H$ & ($-\cos{\frac{\pi}{8}}$,$\sin{\frac{\pi}{8}}$,$-\cos{\frac{\pi}{8}}$) \\ \hline
    $T^3HT^3H$ & ($\cos{\frac{3\pi}{8}}$,$\sin{\frac{3\pi}{8}}$,$\cos{\frac{3\pi}{8}}$) \\ \hline
    $T^3H(T^3)^\dag H$ & ($-\cos{\frac{3\pi}{8}}$,$-\sin{\frac{3\pi}{8}}$,$\cos{\frac{3\pi}{8}}$) \\ \hline
    $(T^3)^\dag HT^3H$ & ($\cos{\frac{3\pi}{8}}$,$-\sin{\frac{3\pi}{8}}$,$-\cos{\frac{3\pi}{8}}$) \\ \hline
    $(T^3)^\dag H(T^3)^\dag H$ & ($-\cos{\frac{3\pi}{8}}$,$\sin{\frac{3\pi}{8}}$,$-\cos{\frac{3\pi}{8}}$) \\ \hline
   \end{tabular}
   \label{tab:Axis of rotation}
  \end{center}
\end{table}

Bob cannot discover which gate combination was chosen because he cannot obtain the received state $\ket{A_\theta}$.
At the same time, Alice can realize any one-qubit gate without it being known to Bob.
We make use of this feature for the BQC protocol.

\section{Main protocol}
In this section, we describe two protocols.
In Protocol 1, Alice performs the calculation without Bob  knowing the input/output or the calculation process other than position of the $CNOT$ gate. 
In Protocol 2, Alice conceals the input/output and calculation process.

In the following, Bob has a universal quantum computer, and Alice has the ability to prepare a computational basis $\ket{0}, \ket{1}$, and a state $\ket {A_\theta}$ such that $\theta = \frac{n\pi}{4} (n=\{0,1,\ldots ,7\})$ to execute the $X$ and $Z$ gates and perform classical calculations.
It should be  noted that this protocol does not require Bob to have the ability to execute the non-Clifford  gate group.

\subsection{Protocol 1}
According to Section 3, we can execute any one-qubit gate without it being known to Bob. 
Our first protocol uses such a hidden one-qubit gate.

\begin{description}
\setlength{\parskip}{3mm}
    \item[Step l.] Alice makes a calculation circuit for her calculation.
    
    \item[Step 2.] Alice converts the circuit to a weak blind circuit according to Table \ref{tab:Axis of rotation}.
    This  conversion is optional, meaning Alice can create a number of structure circuits. 
    Alice chooses one of them.
    
    \item[Step 3.] Alice encrypts the necessary input qubits using the quantum one-time pad and sends them  to Bob.
    In addition, Alice encrypts the ancilla bits required for the $T$ gate and its modification with the  quantum one-time pad and sends them to Bob.

   \item[Step 4.] After sending all the qubits, Alice sends Bob the circuits for the computation.
   Bob computes the circuits using the ancilla bits as well as his $H$ and $CNOT$ gates.
   Then,  Bob sends the measurement result to Alice and asks whether it is the desired result.
   If the result is not the  desired one, Bob makes additional modifications using additional ancilla bits and his $Z$ gate.
   
   \item[Step 5.] Bob sends the qubits to Alice after the calculation is completed. 
   Alice unencrypts the sendsent qubits, measures them, and obtains the result.
\end{description}
Alice can perform the calculation without Bob knowing the input/output or the calculation  process aside from the position of the $CNOT$ gate. 
Here, we define weak blindness.

\begin{defn}[weak blindness]
Let P be a quantum delegated computation on input X and let L(X) be any function
of the input. We say that a quantum delegated computation protocol is weak blind while leaking at most
L(X) and position of the $CNOT$ gate if, on Alice's input X, for any fixed Y = L(X), the following two hold when given Y :
\begin{itemize}
    \item[1.] The distribution of the classical information obtained by Bob in P is independent of X.
    \item[2.]Given the distribution of classical information described in 1, Bob cannot know about 1-qubit gate executed between the $CNOT$ gate.
\end{itemize}
\end{defn}
\setcounter{thm}{0}
\begin{thm}
Protocol 1 is weak blind while leaking at circuit size and the $CNOT$ gate position.
\end{thm}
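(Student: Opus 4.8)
\noindent\emph{Proof proposal.}
The plan is to verify the two clauses of the weak-blindness definition directly, by writing down everything in Bob's view during Protocol~1 and showing that it depends on Alice's input $X$ only through the circuit size and the positions of the $CNOT$ gates. Bob's view consists of (a)~the register of qubits he holds, (b)~the description of the circuit he is asked to run, (c)~the string of teleportation measurement outcomes he produces and reports, and (d)~the ``desired/undesired'' flags and correction instructions he receives back from Alice.

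Clauses (a) and (b) are the easy part. Every qubit Alice transmits --- the encrypted inputs, the ancillas $\ket{A_\theta}$ used for the $T$ and $A_\theta$ teleportations, and the ancillas reserved for the angle corrections --- is sent under a fresh two-bit quantum one-time pad, so by the identity of Section~2.2 the joint state of Bob's register, averaged over Alice's keys, is the maximally mixed state regardless of $X$; it contributes nothing to Bob's view beyond its dimension, which is a function of the circuit size. For (b), Section~3 shows that each one-qubit gate of Alice's chosen decomposition is a word in $H$ and the four T-like gates (Table~\ref{tab:Axis of rotation}), and that each T-like gate $A_\theta$ with $\theta\in\{\pm\pi/4,\pm 3\pi/4\}$ is realized by the gate-teleportation gadget of Figures~\ref{fig:Gate teleportation} and~\ref{fig:Atheta} together with the fixed correction slots of Section~3.2, the sole $\theta$-dependent ingredient being the ancilla state, which by the previous sentence is invisible to Bob. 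Hence the circuit handed to Bob --- $H$ gates, $CNOT$ gates, teleportation $CNOT$s and measurements, and correction slots --- has a topology that is a function of the circuit size and the $CNOT$ positions alone.

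For (c) and (d) and the conclusion of clause~1: gate teleportation of a diagonal gate produces a uniformly random measurement outcome no matter which state is being processed, and each outcome remains uniform conditioned on the prior transcript, so the reported string is uniform on $\{0,1\}^m$ with $m$ a function of the circuit size, independently of $X$. A ``desired'' flag is exactly the event $c\oplus a=0$, where $a$ is the fresh secret $X$-mask bit of the relevant ancilla, hence a uniform bit independent of the rest of Bob's view. Every correction Alice then requests --- push a further, pre-supplied (and again maximally mixed) ancilla through another teleportation gadget, or apply a $Z$ --- is triggered solely by these uniform, $X$-independent flags, while the terminal angle correction $A_{4\theta}\in\{Z,I\}$ is a Pauli and is therefore absorbed into Alice's private key update, causing Bob no $X$-dependent action. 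Assembling (a)--(d): conditioned on the circuit size and the $CNOT$ positions, Bob's classical information is the product of a deterministic circuit description with independent unbiased coins, with no dependence on $X$ --- which is clause~1.

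For clause~2 I would argue indistinguishability. Fix the circuit size and the $CNOT$ positions and consider the segment between two consecutive $CNOT$s, realizing some one-qubit unitary $V$. Its topology is determined by its length, all T-like words of that length induce the same topology, and many inequivalent one-qubit unitaries are realized by words of each such length; by the analysis above, the distribution of Bob's view of the segment --- sub-circuit topology, ancilla registers, reported outcomes and correction flags --- is the same for all of them, the distinguishing data residing only in maximally mixed ancillas. Equivalently, one exhibits a simulator that, given only the leaked topology, reproduces Bob's distribution; hence Bob cannot recover $V$, which is clause~2. I expect the principal obstacle to be the bookkeeping behind (d): one must check carefully that the recursive angle-correction of Section~3.2 never forces Bob to perform an action whose \emph{identity} --- as opposed to whose mere trigger --- depends on the secret angle, in particular that the terminal $A_{4\theta}$ step is carried out entirely inside Alice's key update rather than as a visible $Z$ on Bob's wire, and that recording which pre-supplied ancilla feeds which correction slot leaks nothing because all the ancillas are identically (maximally) mixed. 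Everything else is the routine quantum-one-time-pad calculation.
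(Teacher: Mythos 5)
Your proof is correct and follows the same route as the paper's: both arguments rest on the quantum one-time pad rendering every transmitted qubit maximally mixed and on the uniformity of the gate-teleportation measurement outcomes, and both conclude that nothing beyond the circuit size and the $CNOT$ positions survives in Bob's view. Where you go beyond the paper is in the bookkeeping you label (d): the paper's proof never examines the desired/undesired flags exchanged in Step~4, nor the fact that the terminal correction $A_{4\theta}\in\{Z,I\}$ depends on the parity of $n$ and would leak whether the hidden gate is T-like or Clifford if Bob were asked to apply $Z$ conditionally on it; your observations that the flag is masked by the fresh $X$-key bit $a$ (so $c\oplus a$ is uniform) and that the terminal Pauli must be absorbed into Alice's key update rather than executed visibly by Bob are exactly the checks needed to make the paper's one-line claim that ``Bob does not obtain any information when executing the one-qubit gate'' airtight. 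Your simulator formulation of clause~2 is likewise more explicit than the paper's, which asserts indistinguishability of the one-qubit gates without a separate argument.
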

\begin{proof}
Bob obtains information on the circuit size and CNOT position based on Alice's calculation procedure, while Alice's input and ancilla bits are encrypted by the quantum one-time pad.
The  encryption key does not depend on the input and ancilla bits, meaning Bob knows nothing about them.
Bob makes a measurement when executing a one-qubit gate, but the  measurement result has a success probability of 1/2, regardless of the gate executed and the input.
Therefore, Bob does not obtain any information when executing the one-qubit gate.
Bob cannot learn anything about the output because the computed state, which is  the output, is still encrypted by the quantum one-time pad.
Protocol 1 satisfies weak blindness because Bob does not know anything other than the circuit size and the CNOT position.
\end{proof}

Bob does not know anything about the state of the qubits he receives, which Alice has encrypted using the quantum one-time pad. 
At the same time, the quantum operation and  measurement result do not depend on the contents of the input and the one-qubit gate.
Thus, Bob can determine only the size of the input and the position of the CNOT gate. 
Note also that  Alice can ncrease the input size by sending dummy ancilla bits.

However, as Bob knows the position of the CNOT gate, there is athe possibility that he can infer  the algorithm based on this information when executing a known algorithm,  such as Shor’s or Grover’s algorithm.
However, if an algorithm or application is unknown to the  public, Alice can execute it without Bob being able to  obtain knowledge of it. 
In other words, unpublished algorithms and applications can be tested without fear  of Bob eavesdropping.

Though the existing version of Broadbent's protocol \cite{Broadbent} requires one ancilla bit per $T$ gate, the ability to conceal unpublished algorithms can be achieved by adding two additional  ancilla bits to the protocol.
However, in plotocol 1 Alice's ability must be higher than that required by Broadbent's protocol; she must have the ability to prepare $\ket{0} and \ket{1}$ and execute the $X$, $Z$, $H$, and $S$ gates.

\subsection{Protocol 2}

\begin{figure}[t]
\centering
\includegraphics[scale=0.5]{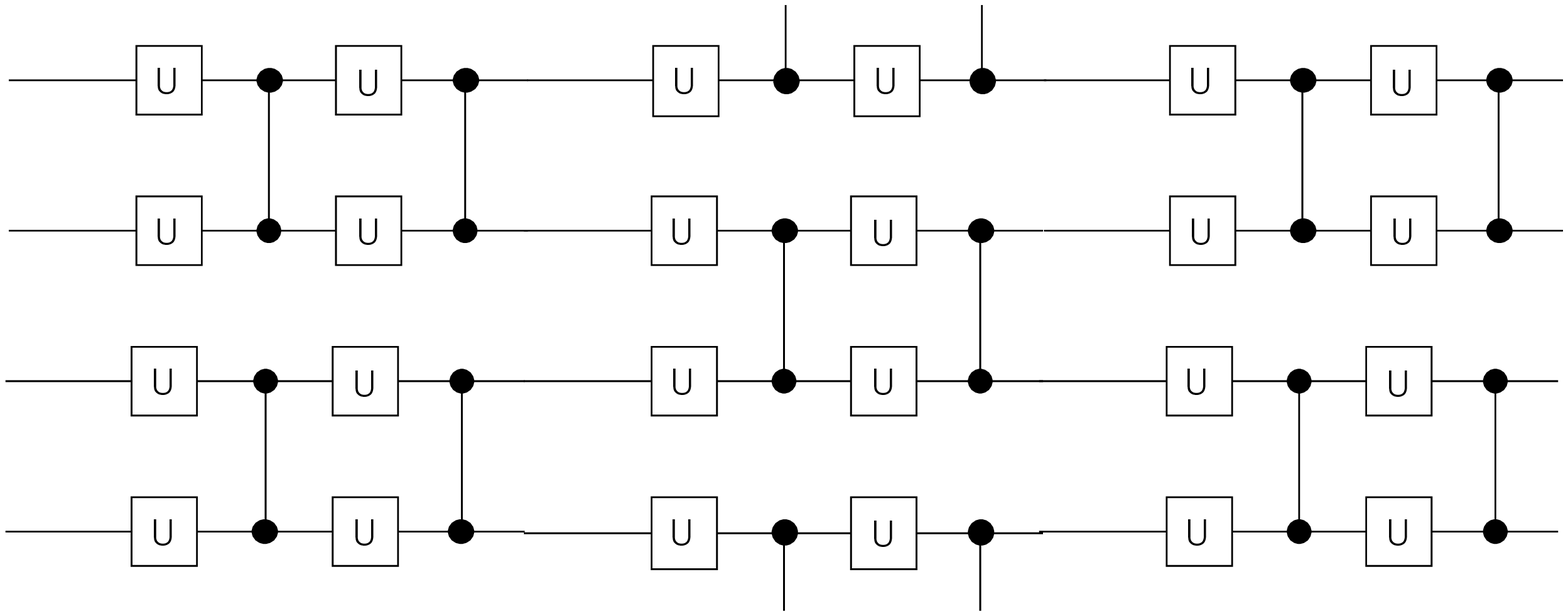}
\caption{Circuit based on BFK protocol graph}
\label{fig:BFK}
\end{figure}

A circuit referring to the BFK protocol can be created, such as the one shown in Figure \ref{fig:BFK}. 
The  identity gate and the $CNOT$ gate can be realized by combining the two $CZ$ gates and one-qubit gates, as shown in Figures \ref{fig:make identity} and \ref{fig:make CNOT}.
The identity gate and the $CNOT$ gate can be realized by combining the two $CZ$ gates and one-qubit gates as shown in Figure \ref{fig:make identity}-\ref{fig:make CNOT}.
Here, $R_z(\frac{\pi}{4})$ and $R_x(\frac{\pi}{4})$ represent the z-axis and x-axis rotation of the Bloch sphere, respectively.
Since Bob cannot obtain information concerning one-qubit gates, he cannot determine whether the identity gate or  the CNOT gate is realized by the two $CZ$ gates.
By arranging the two sets of the $CZ$ gates alternately and in a staggered manner, as shown in Figure \ref{fig:BFK}, Alice can perform the calculation without letting Bob know the position of the $CNOT$ gate.

\begin{figure}[t]
\centering
\includegraphics[scale=0.5]{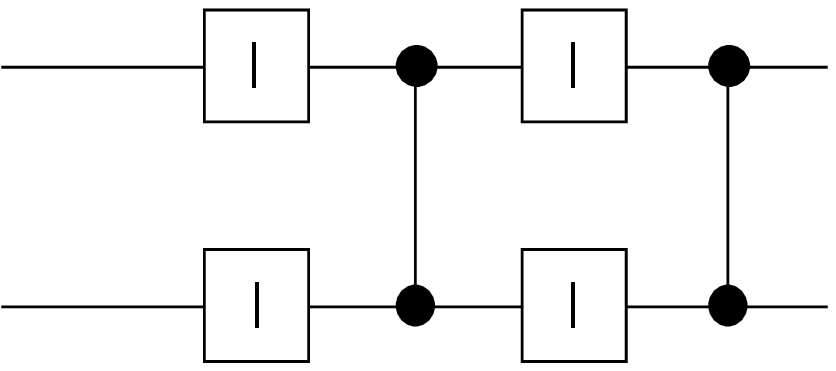}
\caption{Combination of the $CZ$ gates and one-qubit gates acting as the identity gate}
\label{fig:make identity}
\end{figure}

\begin{figure}[t]
\centering
\includegraphics[scale=0.5]{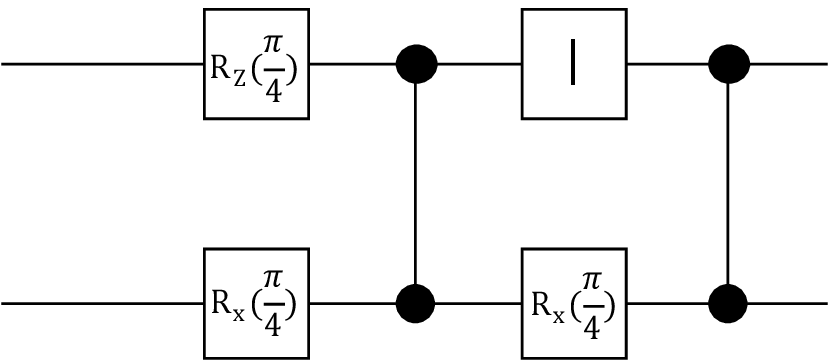}
\caption{Combination of the $CZ$ gates and one-qubit gates acting as the $CNOT$ gate}
\label{fig:make CNOT}
\end{figure}

The procedure for Protocol 2 is that same as the procedure for Protocol 1.
However, it is here necessary to send extra ancilla bits for single qubits between the $CZ$ gates.

\begin{description}
\setlength{\parskip}{3mm}
    \item[Step 1.] Alice makes a calculation circuit for her calculation.
    
    \item[Step 2.] Alice converts the circuit to a blind circuit using Table \ref{tab:Axis of rotation} and Figures \ref{fig:BFK}-\ref{fig:make CNOT}. 
    Note that number of one-qubit gates between two $CZ$ gates is constant so as to achieve blindness.
    When the number of one-qubit gates is not constant, Alice adds dummy qubits.
    This conversion is optional, meaning that Alice can create a number of structure circuits.
    Alice chooses one of them.
    
    \item[Step 3.] Alice encrypts the necessary input the qubits using quantum one-time pad and sends them to Bob.
    In addition, Alice encrypts the ancilla bits required for the $T$ gate and its modification with the quantum one-time pad and sends them to Bob.

   \item[Step 4.] After sending all the qubits, Alice sends Bob the circuits for the computation.
   Bob computes the circuits using the ancilla bits as well as his $H$ and $CZ$ gates.
   At this time, Bob sends the measurement result to Alice and asks whether it is the desired result.
   If the result is not the one desired, Bob makes additional modifications using additional ancilla bits and his $Z$ gate.
   
   \item[Step 5.] Bob sends the qubits to Alice after the calculation is completed.
   Alice unencrypts the sent qubits, measures them, and obtains the result. 
\end{description}

Alice can perform quantum computation while concealing the entire  calculation process, including the position of the $CNOT$ gate.
Here, we define blindness.

\begin{defn}[Blindness{\cite[Definition 2]{BFK}}]
Let P be a quantum delegated computation on input X and let L(X) be any function
of the input. We say that a quantum delegated computation protocol is blind while leaking at most
L(X) if, on Alice's input X, for any fixed Y = L(X), the following two hold when given Y :
\begin{itemize}
    \item[1.] The distribution of the classical information obtained by Bob in P is independent of X.
    \item[2.]Given the distribution of classical information described in 1, the state of the quantum system
obtained by Bob in P is fixed and independent of X.
\end{itemize}
\end{defn}
\setcounter{thm}{1}
\begin{thm}
Protocol 2 is blind while leaking at circuit size.
\end{thm}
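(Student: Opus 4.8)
The plan is to follow the same two-part structure as the proof of Theorem~1, establishing condition~1 of the blindness definition essentially as before and then upgrading the argument to also yield condition~2. For condition~1, I would first observe that Protocol~2 differs from Protocol~1 only in that each $CNOT$ is decomposed into a pair of $CZ$ gates embedded in the fixed staggered lattice of Figure~\ref{fig:BFK}, in which every cell contains the same number of one-qubit gates, dummy qubits being inserted whenever this count would otherwise vary. Hence the circuit Alice transmits in Step~4 is determined entirely by the circuit size: Bob sees the same brickwork of $CZ$ gates whether a given cell implements the identity (Figure~\ref{fig:make identity}) or a $CNOT$ (Figure~\ref{fig:make CNOT}), because the distinguishing data---the one-qubit gates between the $CZ$ gates---are supplied through gate teleportation with the ancillas $\ket{A_\theta}$, each of which is maximally mixed (Section~3.1) and therefore carries no information. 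Exactly as in Theorem~1, every measurement Bob performs during a teleportation step gives $0$ or $1$ with probability $1/2$ independently of the gate teleported and of the input, so the whole classical transcript---circuit description, measurement outcomes, and the binary ``desired/undesired'' replies---has a distribution independent of $X$.

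For condition~2 I would argue that, conditioned on any fixed classical transcript, the density operator of the quantum register Bob holds at every stage is the maximally mixed state on the appropriate number of qubits, and hence independent of $X$. The mechanism is the quantum one-time pad: all of Alice's input qubits, the $\ket{A_\theta}$ ancillas, and the dummy qubits are sent in the form $X^{\vec{a}}Z^{\vec{b}}\ket{\cdot}$ with the key generated by fresh coin flips and never revealed, so averaging over the key gives $I/2^{m}$ on the $m$ qubits Bob initially receives (the computation of Section~2.2, applied qubit-wise and tensored). It then remains to check that this invariant is preserved by every operation Bob performs. The gates $H$ and $CZ$ are Clifford, so conjugating the Pauli key through them only relabels the key bits (Figures~\ref{fig:H gate}, \ref{fig:CNOT gate} and the analogous rule for $CZ$), leaving the averaged state maximally mixed; a teleportation step for $A_\theta$ consumes a maximally mixed ancilla, applies $H$ and $CZ$, and measures one qubit whose outcome---being one-time-padded---is uniform, so that after conditioning the surviving register is again maximally mixed with an updated, still uniformly random, still hidden key, which is the content of Figures~\ref{fig:Atheta}--\ref{fig:enc Atheta}. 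The interactive correction in Step~4 only has Bob apply further $Z$ gates conditioned on Alice's replies, which is again Clifford and preserves the invariant.

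The main obstacle, and the step I would treat most carefully, is the passage from ``each qubit Bob holds is individually maximally mixed'' to ``the joint multi-qubit state Bob holds is the fixed maximally mixed state.'' The clean way is to track the global $2n$-bit Pauli key as a single random variable: it is a deterministic function of Alice's initial coin flips (uniform and independent of $X$) together with the measurement outcomes (which, once conditioned on, are part of the fixed transcript), so from Bob's viewpoint the key acting on his register is uniform on $\{0,1\}^{2n}$, whence
\[
\rho_{\mathrm{Bob}} \;=\; \frac{1}{2^{2n}}\sum_{\vec{a},\vec{b}} X^{\vec{a}}Z^{\vec{b}}\,\rho_X\,Z^{\vec{b}}X^{\vec{a}} \;=\; \frac{I}{2^{n}},
\]
independent of the plaintext state $\rho_X$ and hence of $X$. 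Two bookkeeping points need attention: that the non-Clifford $T$/$A_\theta$ teleportations, which introduce $S$-type corrections (Figure~\ref{fig:T gate}), still leave the residual frame a uniformly random Pauli from Bob's perspective, so that Bob never holds a state outside the one-time-padded family; and that the dummy-qubit padding used to keep each cell's one-qubit-gate count constant is itself one-time-padded, so it does not leak its ``dummy'' status. Once these are in place the theorem follows, with the leaked function $L(X)$ being the circuit size, matching the statement.
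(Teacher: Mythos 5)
Your proposal is correct and follows essentially the same approach as the paper's proof: the quantum one-time pad makes Bob's received state maximally mixed, the teleportation measurement outcomes are uniform and independent of the gate and input, and the $CZ$-plus-hidden-one-qubit-gate construction makes the $CNOT$ indistinguishable from the identity, so only the circuit size leaks. Your version is in fact substantially more careful than the paper's sketch --- in particular in tracking the global Pauli key to justify condition~2 of the blindness definition --- but it is the same argument.
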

\begin{proof}
Bob obtains information on the circuit size based on Alice's calculation procedure.
Alice's input and ancilla bits are encrypted by the quantum one-time pad, and the encryption key does not depend on the input and ancilla bits, meaning Bob knows nothing about them.
Bob takes his measurement while executing a one-qubit gate, but the measurement result has a success  probability of 1/2 regardless of the gate executed and the input.
Therefore, Bob  does not obtain any information when executing the one-qubit gate.
The CNOT gate cannot be  distinguished from the identity gate by combining one-qubit gates and CZ gates, so Bob can  execute the CNOT gate without knowing where in the circuit it was executed.
Bob cannot  learn anything about the output because the computed state, which is the output, is still encrypted by  the quantum one-time pad. 
Protocol 2 satisfies blindness because Bob does not know anything  other than the circuit size.
\end{proof}

Bob can obtain no information about the state of the qubit received from Alice and encrypted by the  quantum one-time pad, and the quantum operation and measurement result do not depend on the input and  calculation processes.
Therefore, Bob can know only the size of the input, and Protocol 2  satisfysatisfies blindness.
Again, note that the size can be increased by  sending dummy ancilla bits.

In protocols developed in previous research, the blindness of all calculation processes was  performed simultaneously by Bob for each gate, including \{H, P, T, CZ, CNOT \}, from a  necessary qubit to computation and dummy qubits.
Then, Bob sent those qubits back to Alice,  who saved the necessary qubits in quantum memory and sent them back when Bob needed to execute the necessary gates\cite{Fitzsimons review, Full}.
According to Protocol 2, the quantum memory and  additional quantum communication can be reduced, and Alice's ability  needs only to be equivalent to that required by the BFK protocol. 
However, this protocol requires more  ancilla bits than Protocol 1.

\section{Verification and fault tolerance}

\subsection{Verification}
Verification is closely related to blindness\cite{BFK}.
If the problem that Alice wishes to solve is included in the computational complexity class NP, it can be verified using a classical computer, but it is also believed that BQP is BQP $\not\subset$ NP \cite {NP-BQP1, NP-BQP2}.
Therefore, it is difficult to verify whether the result of the problem included in BQP is correct using a classical computer and it is also difficult for Alice to do so, for she has few quantum resources.
Thus, it is necessary to verify that the evil Bob does not follow Alice's particular instructions and instead performs different operations.
Here, we show that verification using trap qubits is possible in both Protocol 1 and  Protocol 2. 
Note that a method that does not use trap qubits, which can be used for blindness calculations, is  also known\cite {verification}. 
Even if Alice has only a classical computer, there is still a method that can doperform verification\cite{verification2}.

The following verification method using trap qubits can be used for both Protocol 1 and Protocol 2.
Since the input and the one-qubit gate are hidden in these protocols, the trap qubit can be put in the input qubit.
The trap qubit $\ket{0}$ or $\ket{+}$ is encrypted with quantum one-time pad, and all trap qubit gates are implemented as the identity gates, which are the $A_\theta$ gates with $\theta = 0$.
However, since the identity gate succeeds with probability 1, there must be a 1/4 chance that the desired measurement result is not obtained once (Alice executes the $A_\theta$ gate operation with two $\theta = 0$ qubits, which are identity gates).
At the same time, there must also be a 1/4 chance that the desired measurement result is not  obtained twice (Alice executes the Ag gate, which realizes the identity gate of $\theta = 0$ and the $Z$ gate of $\theta = \pi$, and lets Bob correct it with the $Z$ gate).
When the evil Bob attempts to operate the gate differently from Alice's instruction, Alice can know stochastically whether or not he operates on the trap qubit.
In the case of calculating N qubits mixed with $N_d$ trap qubits, Alice can detect evil Bob's operation with the probability of $\frac{N_d}{N}$.
Alice can increase the probability of detection to $1-(\frac{N-N_d}{N})^s$ by performing the same calculation s times.

\subsection{Fault tolerant quantum computation}
It is known that the ability to perform error correction in a universal quantum computer is an indispensable function, since coherence is destroyed by external noise when manipulating a  quantum state \cite{Nielsen-Chuang,fault1,fault2,fault3,fault4}.
It has also been shown that there is no universal gate set that is transversal  (does not spread errors) \cite{trans-univ1,trans-univ2}.
However, it is known that the H gates and the CNOT gates can implement error correction codes in a transversal manner (without spreading errors)\cite{falt t gate1,falt t gate2}.
For the $T$ gate, this method is implemented only by transversal CNOT gates and measurement by teleportation.
In this protocol, the gates used in Bob's calculation are only the $H$ gate and the $CNOT$ gate, and a non-transversal T-like gate can execute a logical T-like gate by preparing multiple similar $A_\theta$ state.
Therefore, It can be extended to fault tolerant calculations.
In the proposed protocols, only the $H$ gate and the $CNOT$ gate are used in  Bob's calculation, and a non-transversal T-like gate can execute a logical T-like gate by preparing multiple  similar $A_\theta$ statestates. 
Therefore, the protocols can be extended to fault -tolerant calculations.

\section{Conclusion}
In this paper, we proposed two BQC protocols using circuit -based quantum computation (CBQC).
These two protocols execute computation with weak blindness and blindness, respectively.
In previous research\cite{Childs,Broadbent}, it has been discovered that Alice's input and output can be concealed from Bob using the quantum one-time pad. 
However, previous  techniques did not conceal the calculation process.
In our protocol, blindness was achieved using gate teleportation and expanding the $T$ gate, which is important for universal quantum computation, to a T-like gate.
First, we proposed a protocol
with weak blindness (Protocol 1),  which discloses the position of the CNOT gate in addition to the size of the circuit, for which a non-disclosed calculation algorithm is sufficient.
In Protocol 2, BQC  was achieved using CBQC.
Further, it was these were shown that verification using trap qubits is possible using these protocols.
We also showed that the method can be extended to fault-tolerant calculations by the same method for error correction using magic state.

\section*{Acknowledgment}
We would like to thank Takayuki Miyadera for many helpful comments, and we are grateful to Ikko Hamamura for important advice regarding the protocols.


\begin{thebibliography}{99}
 \bibitem{Previous research1}M. Abadi, J. Feigenbaum, and J. Kilian, On hiding information from an oracle. {\sl Journal of Computer and System Sciences}, {\bf 39}, 21-50, 1989.
 
 \bibitem{Childs}A. M. Childs, Secure assisted quantum computation. {\sl Quantum Information and Computation}, {\bf 5}, 456-466, 2005.
 
 \bibitem{Previous research2}D. Aharonov, M. Ben-Or, and E. Eban, Interactive proofs for quantum computations. arXiv:1704.04487v1
 
\bibitem{BFK}A. Broadbent, J. Fitzsimons and E. Kashefi, Universal Blind Quantum Computation. { \sl In Proceedings of the 50st Annual IEEE Symposium on Foundations of Computer Science (FOCS 2009)}, 517-526, 2009.
 \bibitem{MF protocol} T. Morimae and K. Fujii, Blind quantum computation
protocol in which Alice only makes measurements. {\sl Physical Review A}, {\bf 87}, 050301,2013.
 
 \bibitem{Broadbent}A. Broadbent, Delegating Private Quantum Computations. {\sl Canadian Journal of Physics}, {\bf 93}, 941-946, 2015.
 
 \bibitem{Morimae hayashi}M. Hayashi and T. Morimae, Verifiable Measurement-Only Blind Quantum Computing with Stabilizer Testing. {\sl Physical Review Letters}, {\bf 115}, 220502,2015.
 
 \bibitem{half}X. Tan, X. Zhou, Universal half-blind quantum computation. {\sl Annals of Telecommunications},{\bf 72}, 589-595, 2017.
 
 \bibitem{Full}W. Liu, Z. Chen, J. Liu, Z. Su and L. Chi, Full-Blind Delegating Private Quantum Computation. {\sl BIOCELL}, {\bf 56}, 211-223, 2018.
 
 \bibitem{onetimepad}A. Ambainis, M. Mosca, A. Tapp, and R. D. Wolf, Private quantum channels. {\sl In Proceedings of the 41st Annual IEEE Symposium on Foundations of Computer Science (FOCS 2000)}, 547-553,
2000.

 \bibitem{FK}J. Fitzsimons and E. Kashefi, Unconditionally verifiable blind computation. {\sl Physical Review A}, {\bf 96}, 012303, 2017.
  
 \bibitem{Nielsen-Chuang}M. A. Nielsen and I. L. Chuang, Quantum computation and quantum information. {\sl Cambridge University
Press}, 2000.

\bibitem{Gottesma-Knill}D. Gottesman, The Heisenberg representation of quantum computers. {\sl In Group 22: Proceedings of the XXII International Colloquium on Group Theoretical Methods in Physics}, 32–43, 1998.

 \bibitem{Dvi-ind}M. Hajdu\v{s}ek and C. A. P\'{e}rez-Delgado and J. Fitzsimons, Device-Independent Verifiable Blind Quantum Computation. arXiv:1502.02563v2

 \bibitem{Robust}A. Gheorghiu, E Kashefi and P Wallden, Robustness and device independence of verifiable blind quantum computing. {\sl New Journal of Physics}, {\bf 17}, 083040, 2015.

 \bibitem{Fitzsimons review}J. F. Fitzsimons, Private quantum computation: an introduction to blind quantum computing and related protocols. {\sl npj Quantum Inf 3}, {\bf 23}, 2017.
 
 \bibitem{verification}T. Morimae. Blind quantum computing can always be made verifiable. arXiv:1803.06624v1
 
 \bibitem{verification2}U. Mahadev. Classical Verification of Quantum Computations, {\sl In Proceedings of the 59st Annual IEEE Symposium on Foundations of Computer Science (FOCS 2018)}, 259-267, 2018.
 
 \bibitem{NP-BQP1}S. Aaronson, BQP and the polynomial hierarchy. {\sl Proceedings of the forty-second ACM symposium on Theory of computing}, 141-150, 2010.
 
 \bibitem{NP-BQP2}R. Raz and A. Tal, Oracle Separation of BQP and PH. {\sl Proceedings of the 51st Annual ACM SIGACT Symposium on Theory of Computing}, 13-23, 2019.
 
 \bibitem{fault1}P. W. Shor, Scheme for reducing decoherence in quantum computer memory.{\sl Physical Review A}, {\bf 52}, R2493(R), 1995.
 
 \bibitem{fault2}D. P. DiVincenzo and P. W. Shor, Fault-tolerant error correction
with efficient quantum codes.  {\sl Physical Review Letters}, {\bf 77}, 3260, 1996.
 
 \bibitem{fault3}J. Preskill, Fault-tolerant quantum computation. arXiv:quant-ph/9712048v1

 \bibitem{fault4}A. Y. Kitaev. Quantum computations: algorithms and error correction.
{\sl Russian Mathematical Surveys}, {\bf 52}, 1191–1249, 1997.
 
 \bibitem{trans-univ1}X. Chen, H. Chung, A. W. Cross, B. Zeng, and I. L. Chuang, Subsystem stabilizer codes cannot have a universal set of transversal gates for even one encoded qudit. {\sl Physical Review A}, {\bf 78}, 2008.

 \bibitem{trans-univ2}B. Eastin and E. Knill, Restrictions on Transversal Encoded Quantum Gate Sets. {\sl Physical Review Letters}, {\bf 102}, 110502, 2009.
 
 \bibitem{falt t gate1}S. Bravyi and A. Kitaev, Universal quantum computation with ideal Clifford gates and noisy ancillas. {\sl Physical Review A}, {\bf 71}, 022316, 2005.
  
 \bibitem{falt t gate2}A. G. Fowler, S. J. Devitt and C. Jones, Surface code implementation of block code state distillation. {\sl Scientific Reports}, {\bf 3}, 2013.
\end{thebibliography}
\end{document}